\documentclass[
 reprint,
 amsmath,amssymb,
 aps,pra,
 floatfix,
 nofootinbib
]{revtex4-2}

\usepackage[T1]{fontenc}
\usepackage[utf8]{inputenc}
\usepackage{lmodern}
\usepackage{mathtools}
\usepackage{microtype}
\usepackage{amsthm}
\usepackage[hidelinks]{hyperref}

\hypersetup{
  pdftitle={Symmetry-guided constructions of absolutely maximally entangled states in five open cases},
  pdfauthor={Samuel Bevins; Yunus Bidav},
  pdfsubject={Absolutely maximally entangled states and quantum MDS codes},
  pdfkeywords={absolutely maximally entangled states, quantum MDS codes, Hermitian self-dual codes, superregular matrices}
}

\newcommand{\F}{\mathbb{F}}
\newcommand{\Z}{\mathbb{Z}}
\newcommand{\C}{\mathbb{C}}
\newcommand{\AME}{\operatorname{AME}}

\newcommand{\code}[3]{[\![#1,#2,#3]\!]}

\newtheorem{theorem}{Theorem}
\newtheorem{criterion}[theorem]{Criterion}

\begin{document}

\title{Symmetry-guided constructions of absolutely maximally entangled states in five open cases}

\author{Samuel Bevins}
\email{sjbevins@wm.edu}
\affiliation{Department of Physics, William \& Mary, Williamsburg, Virginia 23187, USA}

\author{Yunus Bidav}
\email{yunus@bidav.net}
\affiliation{Independent researcher}

\date{August 9, 2026}

\begin{abstract}
We give explicit Hermitian self-dual maximum distance separable codes
with parameters $[12,6,7]_{25}$, $[18,9,10]_{121}$, and
$[18,9,10]_{169}$.  The stabilizer construction proves the existence
of $\AME(12,5)$, $\AME(18,11)$, and $\AME(18,13)$ states; one-party
projection also gives $\AME(17,11)$ and $\AME(17,13)$.  The first code
was found by a direct search.  A regular $\Z_3^2$ coordinate orbit of
its automorphism group suggested a group-circulant form that reduces
each length-eighteen construction to a nine-entry kernel.  The printed
matrices are certified by exact Hermitian products and complete
square-minor enumeration.
\end{abstract}

\keywords{absolutely maximally entangled states, quantum MDS codes,
Hermitian self-dual codes, superregular matrices}
\maketitle
\raggedbottom

\section{Result and strategy}
\label{sec:introduction}

A pure state in $(\C^q)^{\otimes n}$ is \emph{absolutely maximally
entangled}, denoted $\AME(n,q)$, when every subsystem of at most
$\lfloor n/2\rfloor$ parties is maximally mixed, meaning that its reduced
density matrix is proportional to the identity~\cite{Scott2004,Helwig2012}.
AME states connect quantum secret
sharing, combinatorial designs, multiunitary matrices, which remain
unitary under several index rearrangements, and quantum
error correction~\cite{Helwig2012,Goyeneche2015,Rajchel2026}.  Equivalently,
an $\AME(n,q)$ state is a pure (nondegenerate) quantum error-correcting
code $\code{n}{0}{\lfloor n/2\rfloor+1}_q$.  Its three entries give the
numbers of physical and encoded logical $q$-level systems and the
error-correcting distance.  For even $n$ this code saturates the
quantum Singleton bound, the fundamental length-distance limit for a
quantum code~\cite{Rains1999,HuberGrassl2020}.

Existence remains a parameter-by-parameter problem.  Weight enumerators
count codewords by their number of nonzero coordinates, while shadow
inequalities impose additional positivity constraints.  These coding
bounds give notable nonexistence results, including for four and seven
qubits~\cite{HiguchiSudbery2000,HuberSeven2017,HuberShadow2018}.
Across local
dimensions, seven-party AME states have recently been classified:
$\AME(7,q)$ exists exactly when $q\ge3$~\cite{ShiSeven2026}.  The
Huber--Wyderka AME table and Grassl's quantum-code tables organize much
of the known parameter landscape~\cite{HuberWyderka,GrasslTables}.

We establish the five existence statements
\begin{equation}
 \begin{aligned}
 &\AME(12,5),\quad \AME(17,11),\quad \AME(18,11),\\
 &\AME(17,13),\quad \AME(18,13).
 \end{aligned}
\label{eq:five-results}
\end{equation}
The proof has one essential ingredient: three systematic generator
matrices $G=[I_k\mid A]$, whose rows span the codes and whose first block
is the identity matrix.  We verify
\begin{equation}
 A\overline A^{\mathsf T}=-I_k
 \quad\text{and}\quad
 \det A[R,C]\ne0
 \label{eq:two-checks}
\end{equation}
for every pair of equally sized nonempty row and column sets $R,C$, with
$A[R,C]$ denoting the corresponding square submatrix.  These two checks
make the row space of $G$ both Hermitian self-dual and maximum distance
separable (MDS).  The standard stabilizer construction then gives the three
even-party AME states, and projection gives the two odd-party states.

The search that produced the matrices is secondary to this
certificate, but it explains their form.  A direct search at length
twelve produced the first code.  Its automorphisms revealed a regular
$\Z_3^2$ coordinate orbit.  Imposing the same translation symmetry on
two nine-coordinate orbits reduces an otherwise unrestricted
$9\times9$ block to nine field elements.  Character decomposition
then separates the self-duality equations before the MDS minors are
tested.  The remainder of the paper presents this reduction, gives the
three constructions, and records the exact verification.

\section{The coding criterion}
\label{sec:criterion}

Let $q$ be a prime power and write $\overline{x}=x^q$ for the
nontrivial Galois involution of $\F_{q^2}/\F_q$.  The Hermitian inner
product on $\F_{q^2}^{n}$ is
\begin{equation}
 \langle u,v\rangle_h=\sum_{j=1}^{n}u_j\overline{v_j}.
\end{equation}
The Hermitian dual $C^{\perp_h}$ consists of the vectors orthogonal to every
vector in $C$; Hermitian self-duality means $C=C^{\perp_h}$.  In the
notation $[n,k,d]$, the entries are length, dimension, and minimum
Hamming distance, the fewest coordinates in which two distinct codewords differ.
The code is MDS when $d=n-k+1$, attaining the classical Singleton bound.

\begin{criterion}[Systematic Hermitian MDS criterion]
\label{crit:systematic}
For $G=[I_k\mid A]$ over $\F_{q^2}$, the row space of $G$ is a
Hermitian self-dual $[2k,k,k+1]_{q^2}$ code if and only if both
conditions in Eq.~\eqref{eq:two-checks}
hold~\cite{MacWilliamsSloane1977,RothSeroussi1985}.
\end{criterion}

Indeed,
$G\overline G^{\mathsf T}=I_k+A\overline A^{\mathsf T}$, so the first
condition gives Hermitian self-duality.  Choosing any $k$ columns of
$G$ gives, up to sign, a square minor of $A$; hence all such column
sets are independent exactly when every square minor is nonzero.  This
is the systematic superregularity criterion for an MDS code; a square
minor is the determinant of a square submatrix.

The stabilizer correspondence maps an $\F_{q^2}$-linear Hermitian
self-dual MDS code $[2k,k,k+1]_{q^2}$ to a pure quantum MDS code
$\code{2k}{0}{k+1}_{q}$ and therefore an $\AME(2k,q)$ stabilizer
state, which is fixed by a commuting group of Pauli-type
operators~\cite{AshikhminKnill2001,Ketkar2006}.  In addition,
\begin{equation}
 \AME(2k,q)\ \Longrightarrow\ \AME(2k-1,q)
\label{eq:projection}
\end{equation}
by projecting any one party onto a basis state~\cite[Theorem~2]{Helwig2012}.

\section{Symmetry-reduced search}
\label{sec:search}

The $[12,6,7]_{25}$ code was found by a row-wise search that rejected
partial blocks when a Hermitian orthogonality condition failed or a
square minor vanished.  No coordinate symmetry was imposed.  A
subsequent automorphism calculation revealed a regular
$H\cong\Z_3^2$ orbit on nine coordinates, suggesting the
length-eighteen search form.  A coordinate automorphism is a code-preserving
permutation; regular means that exactly one element of $H$ maps any
chosen coordinate in the orbit to any other.
Restricting a code search by a prescribed automorphism group is an
established method~\cite{Camion1972,BraunKohnertWassermann2005,Wassermann2021};
the result is exhaustive only within that invariant class~\cite{Ostergard2021}.

Index both halves of $[I_9\mid A]$ by
$H=\Z_3\times\Z_3$.  Requiring simultaneous translation invariance
gives
\begin{equation}
 A_{x,y}=a(y-x),\qquad x,y\in H,
\label{eq:group-circulant}
\end{equation}
so the nine-element kernel $a:H\to\F_{q^2}$ determines all $81$ entries.
The block is group-circulant because each entry depends only on the group
difference $y-x$.  Hermitian
self-duality is equivalent to
\begin{equation}
\sum_{h\in H}a(h+g)\overline{a(h)}=-\delta_{g,0},
 \qquad g\in H.
\label{eq:convolution}
\end{equation}
Here $\delta_{g,0}$ is $1$ when $g=0$ and $0$ otherwise.

The character decomposition, which is the finite Fourier transform on
$H$, separates these equations~\cite{LingSole2001,JitmanLingSole2014,Palines2018}; related
Euclidean permutation-group analysis appears in Ref.~\cite{DeyRajan2004}.
Two regular coordinate orbits are required for Hermitian self-duality,
so length eighteen is the smallest positive length for this regular
action.

Enumerating self-dual subspaces in these character
blocks~\cite{RainsSloane1998} gives finite candidate families for
$q=11$ and $q=13$.  We
reconstructed candidates from their character blocks, screened minors
in increasing order, and stopped after finding the kernels below.
Thus the searches identify valid constructions but do not classify the
invariant families.

\section{Explicit constructions and proof}
\label{sec:constructions}

For $q=5$, let
\begin{equation}
 \F_{25}=\F_5(\alpha),\qquad \alpha^2+\alpha+2=0,
\end{equation}
and encode $a+5b$ as $a+b\alpha$, with $0\le a,b<5$.  Define
$G_5=[I_6\mid A_5]$, where
\begin{equation}
A_5=\begin{pmatrix}
1&1&1&1&1&2\\
23&1&2&3&5&3\\
19&7&1&22&17&12\\
9&15&11&2&18&15\\
1&4&22&9&2&1\\
6&7&18&15&10&2
\end{pmatrix}.
\label{eq:A5}
\end{equation}
For a linear code $C\subseteq\F^n$, its Schur square is the span of all
coordinatewise products $x\star y=(x_1y_1,\ldots,x_ny_n)$ with $x,y\in C$.
A generalized Reed--Solomon (GRS) $[n,k]$ code consists of evaluations
of polynomials of degree less than $k$ at $n$ distinct field points, with a
fixed nonzero scale factor in each coordinate.  Codes are monomially
equivalent if coordinate permutation and nonzero coordinate scaling
convert one into the other; these operations preserve Schur-square
dimension.  Here $\dim C^{\star2}=12$, while every GRS $[12,6]$ code has
$\dim C^{\star2}\le11$, so the constructed code is not monomially
equivalent to a GRS code~\cite{Cascudo2015}.

For $q\in\{11,13\}$, order the elements of $H$ as
\begin{equation}
(0,0),(1,0),(2,0),(0,1),(1,1),(2,1),(0,2),(1,2),(2,2).
\label{eq:H-order}
\end{equation}
Read each matrix below as the values $a_q(u,v)$, with row index $u$ and
column index $v$.  Thus Eq.~\eqref{eq:H-order} reads the kernel columns
from left to right.  Reconstruct $A_q$ using
Eq.~\eqref{eq:group-circulant}.

For $q=11$, use
\begin{equation}
 \F_{121}=\F_{11}(\gamma),\qquad \gamma^2=-1,
\end{equation}
and the kernel
\begin{equation}
\bigl(a_{11}(u,v)\bigr)=
\begin{pmatrix}
4+8\gamma&10+3\gamma&9+10\gamma\\
8+10\gamma&8\gamma&8+10\gamma\\
9+10\gamma&10+3\gamma&4+8\gamma
\end{pmatrix}.
\label{eq:kernel-11}
\end{equation}
Conjugation sends $a+b\gamma$ to $a-b\gamma$.

For $q=13$, use
\begin{equation}
 \F_{169}=\F_{13}(\beta),\qquad \beta^2+\beta+2=0,
\end{equation}
and the kernel
\begin{equation}
\bigl(a_{13}(u,v)\bigr)=
\begin{pmatrix}
6+12\beta&9+9\beta&1+7\beta\\
1+10\beta&11+7\beta&5\beta\\
8+12\beta&6+9\beta&2+7\beta
\end{pmatrix}.
\label{eq:kernel-13}
\end{equation}
Here conjugation sends $c_0+c_1\beta$ to
$(c_0-c_1)-c_1\beta$.

\begin{theorem}
\label{thm:main}
The matrices in Eqs.~\eqref{eq:A5},~\eqref{eq:kernel-11},
and~\eqref{eq:kernel-13} define Hermitian self-dual MDS codes
\begin{equation}
[12,6,7]_{25},\qquad [18,9,10]_{121},\qquad [18,9,10]_{169}.
\end{equation}
Consequently all five AME states in Eq.~\eqref{eq:five-results} exist.
\end{theorem}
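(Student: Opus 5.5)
By Criterion~\ref{crit:systematic}, the theorem reduces to checking the two conditions in Eq.~\eqref{eq:two-checks} for each of the three blocks $A_5$, $A_{11}$, $A_{13}$. The AME consequences then follow formally. The stabilizer correspondence turns each Hermitian self-dual $[2k,k,k+1]_{q^2}$ code into an $\AME(2k,q)$ state, which gives $\AME(12,5)$, $\AME(18,11)$ and $\AME(18,13)$. Projection, Eq.~\eqref{eq:projection}, then gives $\AME(17,11)$ and $\AME(17,13)$. So the plan is to certify the three matrices and cite these earlier results.

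\emph{Self-duality.} For $A_5$, I work in $\F_{25}=\F_5(\alpha)$. The conjugate of $\alpha$ is the other root of $x^2+x+2$, namely $\overline{\alpha}=-1-\alpha$. I decode each entry $a+5b$ as $a+b\alpha$ and compute the $36$ Hermitian products $\sum_j A_{ij}\overline{A_{i'j}}$ exactly, checking that they give $-I_6$.

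For $q=11,13$ I would not expand the full $9\times 9$ product. By Eq.~\eqref{eq:group-circulant}, $A\overline A^{\mathsf T}$ is again group-circulant on $H$. Hence it suffices to verify the nine convolution identities in Eq.~\eqref{eq:convolution}, with the kernel read off in the order of Eq.~\eqref{eq:H-order}. I would use the stated conjugations: $a+b\gamma\mapsto a-b\gamma$, and $c_0+c_1\beta\mapsto(c_0-c_1)-c_1\beta$. The second is valid because $\beta+\overline\beta=-1$.

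As a cross-check, I can apply the character transform on $H$. This diagonalizes the group-circulant matrix over a suitable extension field containing cube roots of unity. Each of the nine character values must then satisfy $\hat a(\chi)\,\overline{\hat a}(\chi)=-1$ in the appropriately paired sense. Only the direct convolution check is needed for the proof.

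\emph{MDS property.} I must show that every square submatrix of each $A_q$ is nonsingular. For $k=6$ there are $\sum_{r=1}^{6}\binom{6}{r}^2=923$ minors. For $k=9$ there are $\binom{18}{9}-1=48619$ minors per field. All are computed exactly by Gaussian elimination over $\F_{q^2}$, representing field elements as pairs over $\F_q$.

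Symmetry reduces the length-eighteen work. A translation $t\in H$ maps $A[R,C]$ to $A[R+t,C+t]$ with the same determinant, so only one representative per $H$-orbit of pairs $(R,C)$ needs to be checked. The full check is still cheap, so I would run it in full anyway.

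Two structural shortcuts also help. Self-duality means $\overline{A}^{\mathsf T}=-A^{-1}$, and Jacobi's complementary-minor identity then pairs each minor of $A$ with a complementary minor of $\overline{A}^{\mathsf T}$. So only minors of size at most $\lceil k/2\rceil$ strictly need checking. This identity gives a relation $\det A[R,C]=\pm\det A\cdot\overline{\det A[C^c,R^c]}$, and it roughly halves the work.

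\emph{Main obstacle.} The main obstacle is not conceptual. It is making this exhaustive minor enumeration trustworthy. I would run it in exact arithmetic with the specific field presentations. The first things to confirm are that the polynomials $x^2+x+2$ over $\F_5$ and $\F_{13}$ and $x^2+1$ over $\F_{11}$ are irreducible, and that the entry decoding is unambiguous. Independent confirmation would come from computing the minimum distance directly, or from checking that every $k$-subset of the $2k$ columns of $G$ has full rank, as in the criterion.
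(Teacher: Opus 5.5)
Your proposal matches the paper's proof: reduce via Criterion~\ref{crit:systematic} to exact verification of $A\overline A^{\mathsf T}=-I_k$ (through the nine convolution identities for the group-circulant blocks) and an exhaustive Gaussian-elimination check of all $923$ and $48{,}619$ square minors, then apply the stabilizer correspondence and the projection~\eqref{eq:projection}. Your optional shortcuts are sound but are not used in the paper; the one slip is that Jacobi's identity actually pairs $\det A[R,C]$ with $\overline{\det A[R^c,C^c]}$ (the minor of $\overline A^{\mathsf T}$ on $C^c\times R^c$), not with $\overline{\det A[C^c,R^c]}$, and this does not affect your conclusion that checking only the small minors suffices.
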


\begin{proof}
Exact arithmetic in the stated polynomial bases gives
\begin{equation}
A_5\overline A_5^{\mathsf T}=-I_6,\qquad
A_q\overline A_q^{\mathsf T}=-I_9\quad(q=11,13).
\end{equation}
For the length-eighteen matrices, these identities are equivalent to
the nine equations in Eq.~\eqref{eq:convolution}.  Gaussian elimination
over the same fields shows that every nonempty square minor is nonzero.
The number of nonempty square minors of a $k\times k$ matrix is
\begin{equation}
\sum_{s=1}^{k}\binom{k}{s}^2=\binom{2k}{k}-1,
\end{equation}
so we checked $923$ minors for $A_5$ and $48{,}619$ for each $A_q$.
Criterion~\ref{crit:systematic} proves the three classical-code
statements.  The Hermitian stabilizer construction proves
$\AME(12,5)$, $\AME(18,11)$, and $\AME(18,13)$, while
Eq.~\eqref{eq:projection} gives $\AME(17,11)$ and $\AME(17,13)$.
\end{proof}

\section{Context and conclusion}
\label{sec:conclusion}

The $q=5$ and $q=11$ parameters in Eq.~\eqref{eq:five-results} were
listed as open in the Huber--Wyderka existence table~\cite{HuberWyderka}.
That table does not extend to
$q=13$; the two $q=13$ parameters were unresolved in the sources
reviewed here.

Hermitian self-dual GRS codes are confined to even lengths
$n\le q+1$~\cite{KimLee2004,TongWang2017,ZhaoMa2026}, so length eighteen lies
outside the GRS class for both $q=11$ and $q=13$.  Previously reported
Hermitian self-dual codes at these length-eighteen parameters had
distance nine rather than the MDS distance ten~\cite{SokYang2022}.
Although an earlier extended-duadic theorem would nominally include the
$q=13$ case~\cite{Guenda2012}, its self-duality hypothesis was later
corrected, and the required duadic splitting, a partition of the
relevant residue classes into two multiplier-related halves, fails
here~\cite{TongWang2017}.  At $q=5$ and length twelve, published self-dual
constructions likewise reached distance six rather than
seven~\cite{GrasslRoetteler2015,Suprijanto2014,SokYang2022}.

Quantum MDS
codes with the same length and distance were known at local dimensions
five and seven, but do not provide the $q=11$ or $q=13$ states proved
here~\cite{Ball2021}.

The present theorem proves existence but does not classify the states.
Local-unitary equivalence allows each party to act unitarily on its own
subsystem; stochastic local operations and classical communication
(SLOCC) equivalence also permits nonunitary operations with nonzero success
probability.  Related work provides combinatorial
constructions, inequivalent families, and classifications of smaller
AME cases~\cite{Rather2022,Rather2023,BurchardtRaissi2020,
Ramadas2025,Tan2026,Rajchel2026}.

For prime local dimensions, stabilizer states also admit graph-state
representatives~\cite{Schlingemann2002,BahramgiriBeigi2007}.  In this
description an adjacency matrix encodes the state, and the AME property
becomes a full-rank condition on the block joining the two sides of each
relevant bipartition~\cite{HelwigGraph2013}.

In short, a direct search revealed a $\Z_3^2$ orbit, which led to
group-circulant kernels and exact MDS certificates.  The symmetry
reduces the search, but it is not part of the logical
certificate: the three printed matrices and the two checks in
Eq.~\eqref{eq:two-checks} suffice to prove the theorem.  The searches
were not exhaustive, so classification and equivalence questions
remain open.

\section*{Author contributions and AI use}

S.B. and Y.B. contributed equally.  Under the authors' direction,
Claude Fable 5 (Anthropic) and ChatGPT 5.6 Sol (OpenAI) were used
extensively throughout the project, including the entire computational
search, the development and implementation of search methods, exact
verification, bibliographic checks, and manuscript preparation.  Both
authors independently reviewed the constructions, computations, and
full text.

\begin{acknowledgments}
The authors are especially grateful to Felix Huber for his highly
impactful feedback, including his incisive questions, which substantially
improved this manuscript.  We thank Felix
Huber and Nikolai Wyderka for creating and maintaining the table of
absolutely maximally entangled states.  We also thank the maintainers of
the public quantum-code tables for making parameter-level comparisons
possible.
\end{acknowledgments}

\interlinepenalty=10000
\bibliography{references}
\end{document}